\newtheorem{theorem}{Theorem}[section]
\newtheorem{lemma}[theorem]{Lemma}
\newtheorem{example}{Example}
\newenvironment{proof}[1][Proof]{\begin{trivlist}
\item[\hskip \labelsep {\bfseries #1}]}{\end{trivlist}}
\newenvironment{definition}[1][Definition]{\begin{trivlist}
\item[\hskip \labelsep {\bfseries #1}]}{\end{trivlist}}
\newenvironment{remark}[1]
[Remark]{\begin{trivlist}
\item[\hskip \labelsep {\bfseries #1}]}{\end{trivlist}}
\newcommand{\Rmnum}[1]{\expandafter\@slowromancap\romannumeral #1@}
\begin{document}
%
\title{Two infinite classes of rotation symmetric bent functions with simple
representation}

\author{Chunming~Tang,
 Yanfeng~Qi, Zhengchun Zhou, Cuiling Fan
\thanks{C. Tang is with School of Mathematics and Information, China West Normal University, Sichuan Nanchong, 637002, China. e-mail: tangchunmingmath@163.com
}

\thanks{Y. Qi is with School of Science, Hangzhou Dianzi University, Hangzhou, Zhejiang, 310018, China.
e-mail: qiyanfeng07@163.com.
}
\thanks{
Z. Zhou is with the School of Mathematics, Southwest Jiaotong University, Chengdu, 610031, China. e-mail:
zzc@swjtu.edu.cn.}
\thanks{
C. Fan is with the School of Mathematics, Southwest Jiaotong University, Chengdu, 610031, China. e-mail:
fcl@swjtu.edu.cn.}
}


\maketitle

\begin{abstract}
In the literature, few $n$-variable rotation symmetric bent functions have been constructed. In this paper, we present two infinite classes of rotation symmetric bent functions
on $\mathbb{F}_2^{n}$  of the two forms:

{\rm (i)} $f(x)=\sum_{i=0}^{m-1}x_ix_{i+m}
+ \gamma(x_0+x_m,\cdots,
x_{m-1}+x_{2m-1})$,

{\rm (ii)} $f_t(x)=
\sum_{i=0}^{n-1}(x_ix_{i+t}x_{i+m}
+x_{i}x_{i+t})+ \sum_{i=0}^{m-1}x_ix_{i+m}+
\gamma(x_0+x_m,\cdots,
x_{m-1}+x_{2m-1})$,

\noindent where $n=2m$,  $\gamma(X_0,X_1,\cdots,
X_{m-1})$ is any rotation symmetric polynomial, and $m/gcd(m,t)$ is odd.
The  class (i) of rotation symmetric  bent functions
has algebraic degree ranging from 2 to
$m$ and the other  class (ii)  has algebraic degree
ranging from 3 to $m$.
\end{abstract}

\begin{IEEEkeywords}
Bent functions, rotation symmetric bent functions, the Maiorana-McFarland class of
bent functions, algebraic degree.
\end{IEEEkeywords}

%
\IEEEpeerreviewmaketitle

\section{Introduction}

Boolean bent functions introduced by Rothaus \cite{R1976} in 1976 are an interesting combinatorial object with the maximum
Hamming distance to the set of all affine functions. Such functions have been extensively studied because of
their important applications in cryptograph (stream ciphers \cite{C2010}), sequences  \cite{OSW1982}, graph theory
\cite{PTF2010}, coding theory ( Reed-Muller codes \cite{CHLL1997}, two-weight and three-weight linear codes   \cite{CK1986,Ding2015}), and association schemes \cite{PTFL2011}.
A complete classification of bent functions is still elusive. Further, not only
their characterization, but also
their generation are challenging problems.
Much work on bent functions are devoted to
the construction of bent functions
\cite{CCK2008,C1994,C1996,C2010,CM2011,CG2008,
CK2008,D1974,DD2004,DLCCFG2006,G1968,
L2006,LK2006,LHTK2013,M1973,M2009,M2010,M2011-1,
M2011-2,M2014,MF2013,YG2006}.

Rotation symmetric Boolean functions, introduced by Pieprzyk and Qu \cite{PQ1999}, are invariant under circular translation of indices. Due to less space to be stored and
allowing faster computation of the Walsh transform, they are of great interest.
They can be obtained from
idempotents (and vice versa) \cite{F1999,FF1998}.
Characterizing and
constructing  rotation symmetric bent  functions are difficult and have theoretical and practical interest.
The dual of a rotation symmetric  bent function is also a rotation symmetric bent function.  In the literature,
few constructions of bent idempotents have been presented, which are restricted by
the number of variables and have algebraic degree no more than 4.
See more rotation symmetric bent functions
in \cite{CGL2014,CGL2014-1,DMS2009,GZLC2012,SM2008,SMC2004,
ST2015}.

Quadratic rotation symmetric bent functions have been characterized by Gao et al.
\cite{GZLC2012}. They
proved that
the quadratic function
$$
\sum_{i=1}^{m-1}c_i(
\sum_{j=0}^{n-1}x_jx_{i+j})
+c_m(\sum_{j=0}^{m-1}x_jx_{m+j})
$$
is rotation symmetric bent if and only if
the polynomial $\sum_{i=1}^{m-1}
c_i(X^i+X^{n-i})+c_mX^m$ is
coprime with $X^n+1$, where
$c_i\in \mathbb{F}_2$.
 Stanica
et al.  \cite{SM2008} conjectured that there are no
homogeneous rotation sysmetric
bent functions of algebraic degree
greater than 2. The construction
of rotation symmetric bent functions
of algebraic degree greater than
2 is an interesting problem \cite{C2014}.
Charnes et al. \cite{CRB2002} constructed homogeneous bent functions of algebraic degree 3 in 8, 10, and 12 variables by applying the machinery of invariant theory.
Up to now, there are few known constructions of rotation symmetric bent functions.
 Gao et al. \cite{GZLC2012}
constructed an infinite class of
cubic rotation symmetric bent functions of
the from
\begin{equation*}
f_t(x_0,x_1,\cdots,x_{n-1})
=\sum_{i=0}^{n-1}(x_ix_{i+t}x_{i+m}
+x_ix_{i+t})+\sum_{i=0}^{m-1}
x_ix_{i+m},
\end{equation*}
where $1\leq t\leq m-1$ and
$m/gcd(m,t)$ is odd.
Carlet et al. \cite{CGL2014} presented
$n$-variable cubic
rotation symmetric bent functions of
the form
\begin{equation*}
f(x_0,x_1,\cdots,x_{n-1})
=\sum_{i=0}^{n-1}x_ix_{i+r}x_{i+2r}+
\sum_{i=0}^{2r-1}x_ix_{i+2r}x_{i+4r}+
\sum_{i=0}^{m-1}x_ix_{i+m},
\end{equation*}
where $n=2m=6r$.
Carlet et al. \cite{CGL2014-1} proposed
an infinite class of quartic
 rotation
symmetric bent functions
from two known semi-bent rotation symmetric functions by the indirect sum.
Su and Tang \cite{ST2015} gave a class of
$n$-variable rotation symmetric bent functions of any possible algebraic degree ranging from $2$ to $n/2$ of the form
\begin{equation}\label{ST-E}
f(x)=\sum_{i=0}^{m-1}
x_ix_{i+m}+
\sum_{\delta\in A}
\sum_{{\beta'\boxplus \beta''\in \mathcal{O}_m(\delta)}
}\prod_{i=0}^{m-1}x_i^{\beta'}
x_{i+m}^{\beta''},
\end{equation}
where
\begin{itemize}
\item $\delta\in \mathbb{F}_2^m$.
\item $\mathcal{O}_n(\delta)$ is
the orbit of $\delta$ by cyclic shift.
\item $A$ is a subset of the representative elements of all the
    orbits $\mathcal{O}_m(\delta)$.
\item $\beta'=(\beta_0',\beta_1',
\cdots,\beta_{m-1}')$ and
$\beta''=(\beta_0'',\beta_1'',
\cdots,\beta_{m-1}'')$.
\item $\boxplus$ denotes the sum over $\mathbb{Z}$.
\end{itemize}
These functions contain functions
by Carlet et al. \cite{CGL2014}.

Motivated by the constructions of
Gao et al. \cite{GZLC2012} and Su et al. \cite{ST2015}, this paper constructs new rotation symmetric
bent functions from some known rotation
symmetric bent functions. We obtain
two infinite classes of
rotation symmetric bent functions
which are equivalent to functions in
the class of Maiorana-McFarland.
Let $\gamma(X_0,X_1,\cdots, X_{m-1})$
be a rotation symmetric polynomial in
$\mathbb{F}_2[X_0,X_1,\cdots, X_{m-1}]$,
i.e., $\gamma(X_0,X_1,\cdots, X_{m-1})
=\gamma(X_1,\cdots,X_{m-1} ,X_{0})$.
We obtain two classes of rotation symmetric
bent functions of the form
\begin{align*}
&f(x)=\sum_{i=0}^{m-1}x_ix_{i+m}
+\gamma(x_0+x_m,\cdots,
x_{m-1}+x_m),\\
&f_t(x)=\sum_{i=0}^{n-1}(x_ix_{i+t}x_{i+m}
+x_{i}x_{i+t})+ \sum_{i=0}^{m-1}x_ix_{i+m}+
\gamma(x_0+x_m,\cdots,
x_{m-1}+x_{2m-1}),
\end{align*}
where $1\leq t\leq m-1$ and  $m/gcd(m,t)$ is odd. In fact, these bent functions
belong to the Maiorana-McFarland class
of bent functions.

The rest of the paper is organized as follows: Section 2 introduces some basic
notations, Boolean functions, rotation symmetric bent functions. Section
3 presents the  constructed
rotation  symmetric bent functions. Section 4 proves main results on rotation symmetric
bent functions. Section 5 makes a conclusion.

\section{Preliminaries}

Let $\mathbb{F}_2^n$ denote the $n$-dimensional vector space over
the finite field $\mathbb{F}_2$.
An $n$-variable Boolean function
$f(x_0,x_1,\cdots,x_{n-1})$ is a mapping
from $\mathbb{F}_2^n$ to $\mathbb{F}_2$.
And $f(x_0,x_1,\cdots,x_{n-1})$  can be represented by a polynomial called its algebraic normal form (ANF):
\begin{equation}\label{anf}
f(x_{0},x_1 \cdots, x_{n-1})=
\sum_{u\in \mathbb{F}_2^n}
c_{u}(\prod_{i=0}^{n-1}x_{i}^{\beta_i}),
\end{equation}
where $u=(\beta_0,\beta_1,\cdots,\beta_{n-1})$
and $c_{u}\in \mathbb{F}_2$. The number of variables in the highest order
product term with nonzero coefficient is
called its algebraic degree.

For simplicity, we call polynomials in
$\mathbb{F}_2[x_0,x_1,\cdots,x_{n-1}]$ of the form  in Equation (\ref{anf})
the reduced polynomials.
Hence, an $n$-variable Boolean function
is identified as  a reduced polynomial in
$\mathbb{F}_2[x_0,x_1,\cdots,x_{n-1}]$.

\begin{definition}
A Boolean function $f$ over
$\mathbb{F}_2^n$ or a reduced polynomial
$f$ in $\mathbb{F}_2[x_0,x_1,\cdots,x_{n-1}]$
is called  rotation symmetric if
for each input $x=(x_0,x_1,\cdots,x_{n-1})
\in \mathbb{F}_2^n$, we have
$$
f(x_1,x_2,\cdots,x_{n-1},x_0)
=f(x_0,x_1,\cdots,x_{n-1}).
$$
\end{definition}

The Walsh transform of a Boolean function calculates the correlations between the function and linear Boolean functions.
And the Walsh transform of
$f$ over $\mathbb{F}_{2}^n$
is
$$
\mathcal{W}_f(b)=
\sum_{x\in \mathbb{F}_{2}^n}
(-1)^{f(x)+
\sum_{i=0}^{n-1}x_ib_i},
$$
where $b=(b_0,b_1,\cdots,b_{n-1})
\in \mathbb{F}_{2}^n$ and $x=
(x_0,x_1,\cdots,x_{n-1})$.
\begin{definition}
A Boolean function $f:\mathbb{F}_{2}^n
\longrightarrow \mathbb{F}_2$ is  a bent function if  $\mathcal{W}_f(b)
=\pm 2^{n/2}$ for any $b\in
\mathbb{F}_2^n$.
\end{definition}
A Boolean bent function only exists
for even $n$. The algebraic degree of a bent function is no more than
$m$ for $n=2m\geq 4$ and
the algebraic degree of a bent function
for $n=2$ is 2.

Let $\sigma$ be a permutation of
$\mathbb{F}_2^n$ such that
for any bent function $f$, $f\circ \sigma$
is also bent. Then $\sigma(x)=xA+b$,
where $A$ is an $n\times n$ nonsingular binary matrix over $\mathbb{F}_2$,
$xA$ is the  product of the row-vector
$x$ and $A$,  and
$b\in\mathbb{F}_2^n$. All these
permutations
form an automorphism of the set of bent functions.
 Two
functions $f(x)$ and $g(x)=f\circ\sigma(x)$
are called linearly  equivalent.
If $f(x)$ is bent and $L(x)$ is an affine function, then $f+L$ is  also  a bent function. Two
functions $f$ and $f\circ \sigma +L$ are called EA-equivalent.
 The completed version of a class is the set of all functions EA-equivalent to the functions in the class.

Maiorana and McFarland \cite{M1973}  introduced independently a class of bent functions by concatenating affine functions.
This class  is called the Maiorana-McFarland class $\mathcal{M}$ of  functions defined over $\mathbb{F}_2^m
\times \mathbb{F}_2^m$ of the form
\begin{equation}\label{MM-F}
f(a,y)=y\pi(a)+h(a),
\end{equation}
where $(a,y)\in \mathbb{F}_2^m
\times \mathbb{F}_2^m$,
$\pi(a)$ is any mapping from
$\mathbb{F}_2^m$ to $\mathbb{F}_2^m$, and
$h(a)$ is any Boolean function on
$\mathbb{F}_2^m$. Then
$f$ is bent if and only if $\pi$ is bijective.

\section{Two infinite classes of
rotation symmetric bent functions}

In this section, we only present  two infinite classes of rotation symmetric bent functions. The proofs of the main results
will be given in the next section.

\begin{theorem}\label{rs-1}
Let $n=2m$, $\gamma(X_0,X_1,\cdots,
X_{m-1})\in \mathbb{F}_2[X_0,X_1,\cdots,
X_{m-1}]$ be a reduced polynomial
of algebraic degree $d$. Then the function
\begin{equation*}
f(x)=\sum_{i=0}^{m-1}x_ix_{i+m}
+ \gamma(x_0+x_m,\cdots,
x_{m-1}+x_{2m-1})
\end{equation*}
is a bent function.
Further, if $\gamma(X_0,X_1,\cdots,
X_{m-1})$ is rotation symmetric
, then $f$ is a rotation symmetric bent function. And if $d\geq 2$, then $f$ has algebraic degree
$d$.
\end{theorem}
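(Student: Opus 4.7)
The plan is to exhibit $f$ as EA-equivalent to a function in the Maiorana-McFarland class $\mathcal{M}$ and then to verify the rotation-symmetry and algebraic-degree claims by direct computation. First, I would identify $\mathbb{F}_2^n$ with $\mathbb{F}_2^m\times\mathbb{F}_2^m$ via $a=(x_0,\dots,x_{m-1})$ and $y=(x_m,\dots,x_{2m-1})$, which rewrites $f$ as
\[
f(a,y)=a\cdot y+\gamma(a+y),
\]
where $a\cdot y=\sum_{i=0}^{m-1}a_iy_i$. This is not yet of the form (\ref{MM-F}) because $\gamma(a+y)$ depends on both coordinates.

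The key step is to apply the linear bijection $\tau(a,v)=(v,a+v)$ of $\mathbb{F}_2^m\times\mathbb{F}_2^m$. Using $v\cdot v=\sum_{i=0}^{m-1}v_i=v\cdot\mathbf{1}$ in $\mathbb{F}_2$, where $\mathbf{1}=(1,\dots,1)\in\mathbb{F}_2^m$, a short calculation gives
\[
(f\circ\tau)(a,v)=v\cdot(a+v)+\gamma(a)=v\cdot(a+\mathbf{1})+\gamma(a),
\]
which matches (\ref{MM-F}) with $\pi(a)=a+\mathbf{1}$ and $h(a)=\gamma(a)$. Since $\pi$ is an affine bijection of $\mathbb{F}_2^m$, the MM criterion yields that $f\circ\tau$ is bent, and as $\tau$ is a linear bijection of $\mathbb{F}_2^n$ it preserves bentness; hence $f$ itself is bent.

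For the rotation-symmetry claim, I would examine the two summands of $f$ under the cyclic shift $(x_0,\dots,x_{n-1})\mapsto(x_1,\dots,x_{n-1},x_0)$ separately. The piece $\sum_{i=0}^{m-1}x_ix_{i+m}$ is seen to be shift-invariant by direct re-indexing. Writing $z_i=x_i+x_{i+m}$, the same shift carries the tuple $(z_0,\dots,z_{m-1})$ to $(z_1,\dots,z_{m-1},z_0)$, and the hypothesis that $\gamma$ is rotation-symmetric then makes $\gamma(z_0,\dots,z_{m-1})$ invariant as well, yielding the rotation symmetry of $f$.

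For the algebraic degree, I would fix a top-degree monomial $X_{i_1}\cdots X_{i_d}$ of $\gamma$ with $i_1<\dots<i_d$ in $\{0,\dots,m-1\}$. Substituting $X_k\mapsto x_k+x_{k+m}$ expands this single term into $2^d$ distinct degree-$d$ reduced monomials $x_{i_1+\varepsilon_1m}\cdots x_{i_d+\varepsilon_dm}$, $\varepsilon_j\in\{0,1\}$. Among these, $x_{i_1}\cdots x_{i_d}$ arises from no other monomial of $\gamma$ under the substitution, and the bilinear piece $\sum x_ix_{i+m}$ cannot contribute to reduced monomials whose indices all lie below $m$. Hence the coefficient of $x_{i_1}\cdots x_{i_d}$ in $f$ equals its coefficient in $\gamma$, so $\deg f\ge d$; combined with the trivial upper bound $\deg f\le\max(2,d)=d$ for $d\ge 2$, equality follows. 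The main obstacle is spotting the right substitution: the shear $\tau$ is tailored so that $\gamma$ collapses onto a single $m$-block while the bilinear term takes on MM shape; once $\tau$ is in hand, everything else is routine.
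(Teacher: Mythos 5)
Your proposal is correct and follows essentially the same route as the paper: both exhibit $f$ as linearly/EA-equivalent to a Maiorana--McFarland bent function via the shear identifying $a$ with $x_{[0,m)}+x_{[m,2m)}$, the only cosmetic differences being that you absorb the diagonal term $v\cdot v$ into the permutation $\pi(a)=a+\mathbf{1}$ where the paper instead carries an extra linear summand $\sum_{i}x_i$ and removes it by EA-equivalence, and that for the degree claim you compute the coefficient of a top monomial directly where the paper invokes bentness to settle the case $d=2$. No gaps.
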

\begin{example}
Let $m=6$. Then the function
$$
f(x)=\sum_{i=0}^{5}x_ix_{i+6}
+ \prod_{i=0}^5(x_i+x_{i+6})
$$
is a rotation symmetric bent function of algebraic degree 6.
\end{example}

\begin{theorem}\label{rs-2}
Let $n=2m$, $t$ be an integer such that
$1\leq t\leq m-1$ and
$m/gcd(m,t)$ is odd, and $\gamma(X_0,X_1,\cdots,
X_{m-1})\in \mathbb{F}_2[X_0,X_1,\cdots,
X_{m-1}]$ be a reduced polynomial.
Then the function
\begin{equation*}
f_t(x)=
\sum_{i=0}^{n-1}(x_ix_{i+t}x_{i+m}
+x_{i}x_{i+t})+ \sum_{i=0}^{m-1}x_ix_{i+m}+
\gamma(x_0+x_m,\cdots,
x_{m-1}+x_{2m-1})
\end{equation*}
is a bent function.
Further, if $\gamma(X_0,X_1,\cdots,
X_{m-1})$ is rotation symmetric of algebraic degree $d\geq 3$, then $f$ is a rotation symmetric bent function of algebraic degree $d$.
\end{theorem}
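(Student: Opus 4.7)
My plan is to perform a linear change of coordinates putting $f_t$ into the Maiorana--McFarland form (\ref{MM-F}), and to exploit the already-known bentness of the $\gamma=0$ case (Gao et al.) to obtain bijectivity of the resulting map $\pi$ for free. Write
\begin{equation*}
g_t(x):=\sum_{i=0}^{n-1}\bigl(x_ix_{i+t}x_{i+m}+x_ix_{i+t}\bigr)+\sum_{i=0}^{m-1}x_ix_{i+m},
\end{equation*}
so that $f_t=g_t+\gamma(x_0+x_m,\dots,x_{m-1}+x_{2m-1})$. I would introduce new variables $a_i=x_i+x_{i+m}$ and $y_i=x_{i+m}$ for $0\le i\le m-1$; this is an $\mathbb{F}_2$-linear bijection of $\mathbb{F}_2^{2m}$ onto $\mathbb{F}_2^m\times\mathbb{F}_2^m$ with inverse $x_i=a_i+y_i$, $x_{i+m}=y_i$. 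Under this change, the added summand becomes $\gamma(a_0,\dots,a_{m-1})$, depending only on $a$.

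Next I would rewrite $g_t$ in the $(a,y)$-coordinates. The key simplification is to pair index $i$ with $i+m$ inside each cyclic sum, for example
\begin{equation*}
x_jx_{j+t}x_{j+m}+x_{j+m}x_{j+t+m}x_j=x_jx_{j+m}(x_{j+t}+x_{j+t+m})=x_jx_{j+m}\,a_{(j+t)\bmod m},
\end{equation*}
and analogously for the quadratic parts. Substituting $x_j=a_j+y_j$, $x_{j+m}=y_j$ and using $y_j^2=y_j$, every $y_jy_k$ contribution cancels and one arrives at
\begin{equation*}
g_t=\sum_{j=0}^{m-1}y_j\pi_j(a)+h(a),
\end{equation*}
with explicit expressions $\pi_j(a)=a_ja_{(j+t)\bmod m}+a_{(j-t)\bmod m}+a_j+1$ and $h(a)=\sum_{j=0}^{m-1}a_ja_{(j+t)\bmod m}$. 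This is precisely the MM shape (\ref{MM-F}) for $g_t$. Since $g_t$ is already known to be bent under the hypothesis $m/\gcd(m,t)$ odd (the Gao et al.\ result recalled in the introduction), the MM criterion forces $\pi=(\pi_0,\dots,\pi_{m-1}):\mathbb{F}_2^m\to\mathbb{F}_2^m$ to be a permutation. Consequently $f_t=y\pi(a)+\bigl(h(a)+\gamma(a)\bigr)$ is again in MM form with the same bijective $\pi$, and hence bent for every reduced polynomial $\gamma$.

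For rotation symmetry and degree: the cyclic shift $\rho:x\mapsto(x_1,\dots,x_{n-1},x_0)$ sends $a_i=x_i+x_{i+m}$ to $a_{(i+1)\bmod m}$, so if $\gamma$ is rotation symmetric then $\gamma(a_0,\dots,a_{m-1})$ is $\rho$-invariant; combined with the known rotation symmetry of $g_t$, this yields rotation symmetry of $f_t$. For the degree, $g_t$ has total degree $3$, while $\gamma(a_0,\dots,a_{m-1})$ expanded in $x$ has degree $d$ and contains the monomial $\prod_{i\in S}x_i$ obtained by selecting the $x_i$ summand of each $a_i=x_i+x_{i+m}$, where $S$ is the support of any top-degree term of $\gamma$; since every cubic monomial of $g_t$ involves at least one index in $\{m,\dots,2m-1\}$, this ``left-half only'' monomial cannot cancel, giving $\deg f_t=d$ whenever $d\ge 3$. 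The main obstacle is the identity placing $g_t$ in MM form: it only becomes visible after pairing indices $i$ and $i+m$ inside each cyclic sum and carefully tracking which $y_jy_k$ terms vanish; once this bookkeeping is complete, invoking Gao et al.\ for bijectivity of $\pi$ bypasses any separate analysis of the arithmetic condition on $m$ and $t$.
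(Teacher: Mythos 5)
Your proof is correct, and its skeleton is the same as the paper's: both arguments exhibit $f_t$ as linearly equivalent to a Maiorana--McFarland function via the substitution that makes $a_i$ essentially $x_i+x_{i+m}$ and $y_i$ essentially $x_{i+m}$, so that $\gamma$ becomes a function of $a$ alone and rides along for free. The differences are worth noting. First, the paper works in the opposite direction: it starts from an MM function $f_0(a,y)=\sum_i\pi_i(a)y_i+\gamma(1+a)+h_0(a)$ with $\pi_i(a)=a_ia_{i+t}+a_{i+t}+a_{i+m-t}$, cites Gao et al.'s proof that this specific $\pi$ is a permutation when $m/\gcd(m,t)$ is odd, and then grinds through the identities of Lemma \ref{lem-2} (using the shifted substitution $a_i=x_i+x_{i+m}+1$, $y_i=x_{i+m}+1$) to show the result equals $f_t$ plus an affine function. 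You instead go from $f_t$ to MM form with the unshifted substitution and, crucially, obtain bijectivity of your $\pi_j(a)=a_ja_{j+t}+a_{j-t}+a_j+1$ for free from the \emph{converse} direction of the MM criterion applied to the already-known bent function $g_t$ of Gao et al.; this cleanly sidesteps both the explicit permutation lemma and any need to match your $\pi$ against theirs, at the cost of invoking their bentness theorem as a black box rather than only their permutation argument. Your computation of $\pi_j$ checks out (the two $a_{j+t}$ contributions from the cubic and quadratic parts cancel); your stated $h(a)=\sum_j a_ja_{j+t}$ is slightly off --- the pure-$a$ quadratic terms arise only from the pairs with $j+t<m$, so $h(a)=\sum_{j=0}^{m-1-t}a_ja_{j+t}$ --- but since $h$ plays no role in the MM bentness criterion this slip is immaterial. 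The rotation-symmetry and degree arguments are sound and match the paper's (the paper is even terser on the degree point than you are).
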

\begin{example}
Let $m=6$ and $t=2$. Then the function
$$
f_2(x)=
\sum_{i=0}^{11}(x_ix_{i+2}x_{i+6}
+x_{i}x_{i+2})+ \sum_{i=0}^{5}x_ix_{i+6}
$$
is a rotation symmetric  bent function of algebraic degree 6.
\end{example}
\begin{lemma}\label{lem-d}
Let $g(x_0,x_1,\cdots,x_{n-1})$ be a Boolean function on $\mathbb{F}_2^n$ or a reduced polynomial in $\mathbb{F}_2
[x_0,x_1,\cdots,x_{n-1}]$ such that

{\rm (1)} for any $0\leq i\leq m-1$,
$g(x_0,\cdots,x_i,\cdots,x_{i+m},
\cdots,x_{n-1})
=g(x_0,\cdots,x_{i+m},
\cdots, x_i,\cdots,x_{n-1})$.

{\rm (2)} for any $0\leq i\leq m-1$,
$x_ix_{i+m}$ is not in the terms of
$g$.

{\rm (3)} $g$ is rotation symmetric.
\end{lemma}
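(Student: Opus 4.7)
The plan is to write $g$ explicitly in the difference variables $y_i := x_i + x_{i+m}$ and extract a rotation symmetric reduced polynomial $\gamma(X_0,\ldots,X_{m-1})$ such that $g(x)=\gamma(y_0,\ldots,y_{m-1})$. Condition~(1) says that $g$ is invariant under each transposition $\sigma_i$ swapping $x_i$ and $x_{i+m}$; these involutions commute and generate an elementary abelian $2$-group $\Sigma$ of order $2^m$ acting on the set of ANF monomials. The ANF of $g$ therefore decomposes as a sum of full $\Sigma$-orbit sums, each included with coefficient $0$ or $1$.

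Next I would compute each orbit sum in closed form. For a monomial $\prod_{j\in S}x_j$, classify each index $i\in\{0,\ldots,m-1\}$ by the intersection $S\cap\{i,i+m\}$: empty (type~$e$), a singleton (type~$s$), or the whole pair (type~$d$). Only type-$s$ indices produce nontrivial $\sigma_i$-orbits, so the orbit sum of $\prod_{j\in S}x_j$ equals
\[
\prod_{i\in T_d(S)} x_ix_{i+m}\cdot\prod_{i\in T_s(S)}(x_i+x_{i+m})=\prod_{i\in T_d(S)}z_i\cdot\prod_{i\in T_s(S)}y_i,
\]
with $z_i:=x_ix_{i+m}$. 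Condition~(2), read as forbidding any monomial of $g$ to contain $x_ix_{i+m}$ as a factor, then rules out every orbit sum with $T_d\neq\emptyset$: the expansion of such a sum produces only monomials divisible by $x_ix_{i+m}$ for each $i\in T_d$. Hence
\[
g=\sum_{T\subseteq\{0,\ldots,m-1\}}c_T\prod_{i\in T}y_i=\gamma(y_0,\ldots,y_{m-1}),
\]
where $\gamma(X_0,\ldots,X_{m-1})=\sum_T c_T\prod_{i\in T}X_i$.

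Finally I would invoke condition~(3). The cyclic shift $\rho:(x_0,\ldots,x_{n-1})\mapsto(x_1,\ldots,x_{n-1},x_0)$ maps $y_i$ to $y_{(i+1)\bmod m}$ (using in particular $y_{m-1}=x_{m-1}+x_{2m-1}\mapsto x_m+x_0=y_0$). Since distinct subsets $T$ produce monomial expansions with disjoint supports, the substitution $X_i\mapsto y_i$ is injective on reduced polynomials; the $\rho$-invariance of $g$ therefore forces $\gamma(X_0,\ldots,X_{m-1})=\gamma(X_1,\ldots,X_{m-1},X_0)$, i.e.\ $\gamma$ is rotation symmetric. The same injectivity also delivers the degree preservation $\deg_x g=\deg_X\gamma$, should a degree statement be part of the conclusion. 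The main obstacle I foresee is the correct application of condition~(2): rather than forbidding just a single monomial, it must be interpreted in the strong sense and used to annihilate the entire family of $\Sigma$-orbit sums with $T_d\neq\emptyset$; once that point is secured, the remainder is routine orbit bookkeeping and a transfer of the cyclic shift from $x$ to $y$.
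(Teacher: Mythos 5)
Your proof is correct, but it takes a genuinely different route from the paper's. The paper argues by induction on the algebraic degree of $g$: it isolates one pair of variables, writes $g=x_ig'+x_{i+m}g''$, uses condition~(1) to conclude $g'=g''$, so that $g=(x_i+x_{i+m})g'$, and then applies the induction hypothesis to $g'$ to produce $\gamma=X_i\gamma'$. You instead give a one-shot structural argument: the swaps $\sigma_i$ generate an elementary abelian group acting on ANF monomials, so $g$ is a disjoint union of closed-form orbit sums $\prod_{i\in T_d}x_ix_{i+m}\prod_{i\in T_s}(x_i+x_{i+m})$, condition~(2) (correctly read in the strong sense, as forbidding $x_ix_{i+m}$ as a \emph{factor} of any term) eliminates every orbit with $T_d\neq\emptyset$, and the disjointness of supports gives both the injectivity of $X_i\mapsto x_i+x_{i+m}$ needed to transfer rotation symmetry from $g$ to $\gamma$ and the preservation of degree. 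Your version is arguably more complete: the paper's induction step tacitly assumes that all of $g$ lies in the ideal generated by $x_i$ and $x_{i+m}$ (it never accounts for the summand of $g$ involving neither variable, nor does it explain where condition~(2) enters to exclude an $x_ix_{i+m}g'''$ component), whereas your orbit decomposition handles all monomials simultaneously and makes the role of each hypothesis explicit. What the paper's induction buys is brevity and a direct recursive recipe for $\gamma$; what yours buys is a rigorous justification of the uniqueness and rotation-symmetry claims that the paper only asserts.
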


\noindent Then there exists a rotation symmetric polynomial
$\gamma(X_0,X_1,\cdots,X_{m-1})
\in \mathbb{F}_2[X_0,X_1,\cdots,X_{m-1}]$
such that
$$
g(x_0,x_1,\cdots,x_{n-1})
=\gamma(x_0+x_m,x_1+x_{m+1},\cdots,
x_{m-1}+x_{2m-1}).
$$
\begin{proof}
If there exists $\gamma(X_0,X_1,\cdots,X_{m-1})$
such that
$$
g(x_0,x_1,\cdots,x_{n-1})
=\gamma(x_0+x_m,x_1+x_{m+1},\cdots,
x_{m-1}+x_{2m-1}).
$$
Since $g$ is rotation symmetric, then
$\gamma(X_0,X_1,\cdots,X_{m-1})$ is rotation symmetric.

Now we will give the proof by the induction on algebraic degree $d$ of $g$, i.e,
there exits such rotation symmetric polynomial $\gamma$ from
rotation symmetric $g(x)$ of algebraic degree $d$ satisfying conditions (1) and
(2).

{\rm 1)} When $g=0$ or
$g=1$, such $\gamma$ obviously exits.

{\rm 2)} When $d=1$, such $\gamma$ obviously exits.

{\rm 3)} Suppose $d\geq 2$.
From the conditions (1) and (2), there exists $i$ such that
\begin{align*}
g(x_0,x_1,\cdots,x_{n-1})=&
x_ig'(x_0,\cdots,x_{i-1},x_{i+1},
\cdots,x_{i+m-1},x_{i+m+1},
\cdots,x_{n-1})\\
&+x_{i+m}g''(x_0,\cdots,x_{i-1},x_{i+1},
\cdots,x_{i+m-1},x_{i+m+1},
\cdots,x_{n-1}),
\end{align*}
where $g',g''\in \mathbb{F}_2(x_0,\cdots,x_{i-1},x_{i+1},
\cdots,x_{i+m-1},x_{i+m+1},
\cdots,x_{n-1})$. From the condition
(1), we have $g'=g''$.
From the induction of algebraic degree $d$,
for $g'$ and $g''$, there exists $\gamma'
(X_0,\cdots,X_{i-1},X_{i+1},\cdots,
X_{m-1})$ such that
$$
g'=g''=\gamma'(x_0+x_m,
\cdots,x_{i-1}+x_{i+m-1},
x_{i+1}+x_{i+m+1},\cdots,
x_{m-1}+x_{2m-1}).
$$
Take $\gamma(X_0,X_1,\cdots,X_{m})
=X_i\gamma'
(X_0,\cdots,X_{i-1},X_{i+1},\cdots,
X_{m-1})$. Then
$$
g(x_0,x_1,\cdots,x_{n-1})
=\gamma(x_0+x_m,x_1+x_{m+1},\cdots,
x_{m-1}+x_{2m-1}).
$$

Hence, this lemma follows.
\end{proof}
\begin{remark}
Let $g(x)=\sum_{i=0}^{m-1}
x_ix_{i+m}+
\sum_{\delta\in A}
\sum_{\beta'\boxplus\beta''\in \mathcal{O}_m(\delta)
}\prod_{i=0}^{m-1}x_i^{\beta'}
x_{i+m}^{\beta''}$ defined in
Equation (\ref{ST-E}).
We can verify that $g(x)$ satisfies
all the three conditions in Lemma
\ref{lem-d}.
There exists $\gamma(X_0,X_1,\cdots,
X_{m-1})$ such that
$$
f(x)=\sum_{i=0}^{m-1}x_ix_{i+m}
+ \gamma(x_0+x_m,\cdots,
x_{m-1}+x_{2m-1})
$$
is bent. This shows that
rotation symmetric bent functions constructed by Su and Tang \cite{ST2015}
are contained in functions in Theorem \ref{rs-2}.
\end{remark}

\section{Proofs of main results}

In this section, we give the proofs of our main results on rotation symmetric bent functions.
\subsection{The proof of Theorem \ref{rs-1}}

For any function $\gamma$ on
$\mathbb{F}_2^m$, the function
$$
f_0(a,y)
=\sum_{i=0}^{m-1}y_ia_i+
\gamma(a_0,a_1,\cdots,a_{m-1})
$$
is a bent function on
$\mathbb{F}_2^m\times \mathbb{F}_2^m$
in the Maiorana-McFarland class $\mathcal{M}$ of  functions defined in Equation (\ref{MM-F}).
Take  the nondegenerate linear transform
on $f_0(a,y)$  as
\begin{align*}
&y_i=x_i,\\
&a_i=x_i+x_{i+m},
\end{align*}
where $0\leq i\leq m-1$. We have a bent function
\begin{align*}
f_1(x_0,x_1,\cdots,x_{n-1})
=& \sum_{i=0}^{m-1}x_i(x_i+x_{i+m})
+\gamma(x_0+x_m,\cdots,x_{m-1}+x_{2m-1})\\
=& \sum_{i=0}^{m-1}x_ix_{i+m}
+\gamma(x_0+x_m,\cdots,x_{m-1}+x_{2m-1})
+  \sum_{i=0}^{m-1}x_i.
\end{align*}
Since $ \sum_{i=0}^{m-1}x_i$ is a linear
function, then $f(x)=
f_1+ \sum_{i=0}^{m-1}x_i$ is a bent function.
Further, if $\gamma$ is a rotation symmetric polynomial in $\mathbb{F}_2[X_0,X_1,
\cdots,X_{m-1}]$, then $f(x)$ is also rotation symmetric.

If $\gamma(X_0,X_1,\cdots,X_{m-1}) $
has algebraic degree $d$, then
$\gamma(x_0+x_m,\cdots,x_{m-1}+x_{2m-1})$
has algebraic degree $d$. If
$d\geq 3$, then the algebraic degree of
$f$ is $d$. Otherwise, $f$ has algebraic
degree less than 2. Thus,
$f$ has algebraic degree 2 since $f$ is bent.
Hence, Theorem \ref{rs-1} follows.

\subsection{The proof of Theorem \ref{rs-2}}

We start with the following lemma for the proof of Theorem \ref{rs-2}.

\begin{lemma}\label{lem-2}
Let $n=2m$, $1\leq t\leq m-1$,  and $x_0,x_1,\cdots,x_{n-1}
\in \mathbb{F}_2$. Then

{\rm (1)} $\sum_{i=0}^{m-1}
(x_ix_{i+t}+x_{i+m-t}x_{i+m})
=\sum_{i=0}^{n-1}x_ix_{i+t}
+\sum_{m-t}^{m-1}(x_ix_{i+t}
+x_{i+m}x_{i+m+t})$.

{\rm (2)} $\sum_{i=0}^{m-1}
(x_ix_{i+m+t}+x_{i-t}x_{i+m})
=\sum_{i=m-t}^{m-1}(x_ix_{i+m+t}
+x_{i+m}x_{i+t})$.

{\rm (3)}
$\sum_{i=0}^{m-1}
(x_ix_{i+t}+x_{i+m-t}x_{i+m}
+x_ix_{i+m+t}+
+x_{i-t}x_{i+m})
=\sum_{i=0}^{n-1}x_ix_{i+t}
+\sum_{i=m-t}^{m-1}(x_i+
x_{i+m}+1)(x_{i+t}
+x_{i+m+t}+1)
+
\sum_{i=m-t}^{m-1}(x_i+x_{i+t}
+x_{i+m}+x_{i+m+t}+1)$.

{\rm (4)} $
\sum_{i=0}^{m-1}x_ix_{i+m}
(x_{i+t}+x_{i+m+t})=
\sum_{i=0}^{n-1}x_ix_{i+t}x_{i+m}$.

{\rm (5)}
Let $0\leq i \leq m-1$,
$y_i=x_{i+m}+1$, and
$a_i=x_i+x_{i+m}+1$. Then
$(a_ia_{i+t}+a_{i+t}
+a_{i+m-t})y_i=x_ix_{i+m}(x_{i+t}
+x_{i+m+t})
+x_ix_{i+m}+(x_ix_{i+t}+x_{i+m-t}
x_{i+m})+(x_ix_{i+m+t}
+x_{i+m}x_{i-t})
+x_i+x_{i+m}+x_{i+m-t}
+x_{i-t}+1$.
\end{lemma}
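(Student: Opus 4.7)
The plan is to verify each of the five identities by direct algebraic manipulation, exploiting the cyclic structure $\mathbb{Z}/n\mathbb{Z}$ of the index set ($n=2m$) and the fact that over $\mathbb{F}_2$ any monomial appearing an even number of times cancels. For (1)--(4) the uniform tactic is: reindex one summation by a shift of the form $i\mapsto i\pm t$ or $i\mapsto i+m-t$, split the resulting sum according to whether the shifted index crosses the boundary between $[0,m-1]$ and $[m,2m-1]$, and then match pieces term by term with the other side.

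For identity~(1), I would reindex the second sum on the left by $j=i+m-t$, turning $x_{i+m-t}x_{i+m}$ into $x_jx_{j+t}$ with $j\in[m-t,2m-t-1]$; splitting $\sum_{i=0}^{n-1}x_ix_{i+t}$ at $i=m$ and cancelling the doubled pieces produces exactly the claimed boundary correction $\sum_{i=m-t}^{m-1}(x_ix_{i+t}+x_{i+m}x_{i+m+t})$. Identity~(2) is the same bookkeeping with shift $j=i-t\pmod n$; the no-wrap portion of the shifted sum coincides with part of the first left sum and cancels, leaving only the two boundary terms. Identity~(4) is the shortest: split $\sum_{i=0}^{n-1}x_ix_{i+t}x_{i+m}$ at $i=m$, reindex the upper half by $i\mapsto i-m$, and use $x_{i+2m}=x_i$.

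Identity~(3) comes essentially for free from (1) and (2). Adding those two right-hand sides gives $\sum_{i=0}^{n-1}x_ix_{i+t}$ together with a single boundary sum $\sum_{i=m-t}^{m-1}(x_ix_{i+t}+x_{i+m}x_{i+m+t}+x_ix_{i+m+t}+x_{i+m}x_{i+t})$. Writing $A_i=x_i+x_{i+m}$ and $B_i=x_{i+t}+x_{i+m+t}$, the inner bracket equals $A_iB_i$; over $\mathbb{F}_2$ one has
\[
(A_i+1)(B_i+1)+(A_i+B_i+1)=A_iB_i,
\]
which reproduces the right-hand side of~(3) term by term.

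Identity~(5) is a pure expansion and will be the main obstacle. Substituting $y_i=x_{i+m}+1$, $a_i=x_i+x_{i+m}+1$, $a_{i+t}=x_{i+t}+x_{i+m+t}+1$, and $a_{i+m-t}=x_{i+m-t}+x_{i-t}+1$ (where $x_{i+2m-t}=x_{i-t}$ by $n$-periodicity), I would expand $a_ia_{i+t}+a_{i+t}+a_{i+m-t}$, multiply by $y_i$, and reduce using $x_j^2=x_j$. The raw expansion contains more than twenty monomials, so care is needed to recognise the cubic $x_ix_{i+m}(x_{i+t}+x_{i+m+t})$, the three quadratic blocks $x_ix_{i+m}$, $x_ix_{i+t}+x_{i+m-t}x_{i+m}$, $x_ix_{i+m+t}+x_{i+m}x_{i-t}$, and the linear/constant tail $x_i+x_{i+m}+x_{i+m-t}+x_{i-t}+1$. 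Organising the computation by first collecting all products that contain the factor $x_{i+m}$ (arising from the constant in $y_i$) and then the remainder should keep the bookkeeping manageable.
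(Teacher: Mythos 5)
Your proposal is correct and follows essentially the same route as the paper: identities (1), (2), and (4) by reindexing modulo $n$ and cancelling doubled terms over $\mathbb{F}_2$, identity (3) by summing (1) and (2) and applying $(A+1)(B+1)+(A+B+1)=AB$, and identity (5) by direct substitution and expansion. The only (immaterial) difference is organizational: the paper tames the expansion in (5) by first factoring $(a_i+1)a_{i+t}+a_{i+m-t}$ and using $(x_{i+m}+1)(x_i+x_{i+m})=x_i(x_{i+m}+1)$, whereas you propose grouping monomials by the factor $x_{i+m}$.
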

\begin{proof}
{\rm (1)}
\begin{align*}
\sum_{i=0}^{m-1}
(x_ix_{i+t}+x_{i+m-t}x_{i+m})
=&\sum_{i=0}^{n-1}x_ix_{i+t}
+\sum_{i=m}^{2m-1}x_ix_{i+t}
+\sum_{i=0}^{m-1}x_{i+m-t}x_{i+m}\\
=&\sum_{i=0}^{n-1}x_ix_{i+t}
+\sum_{i=m}^{2m-1}x_ix_{i+t}
+\sum_{i=m-t}^{2m-t-1}x_{i}x_{i+t}\\
=&\sum_{i=0}^{n-1}x_ix_{i+t}
+\sum_{i=m-t}^{m-1}x_ix_{i+t}
+\sum_{i=2m-t}^{2m-1}x_{i}x_{i+t}\\
=&\sum_{i=0}^{n-1}x_ix_{i+t}
+\sum_{m-t}^{m-1}(x_ix_{i+t}
+x_{i+m}x_{i+m+t}).
\end{align*}

{\rm (2)}
\begin{align*}
\sum_{i=0}^{m-1}
(x_ix_{i+m+t}+x_{i-t}x_{i+m})
=& \sum_{i=0}^{m-1}
x_ix_{i+m+t}+\sum_{i=2m-t}^{m-1-t}
x_{i}x_{i+m+t}\\
=& \sum_{i=m-t}^{m-1}
x_ix_{i+m+t}+\sum_{i=2m-t}^{2m-1}
x_{i}x_{i+m+t}\\
=&
\sum_{i=m-t}^{m-1}(x_ix_{i+m+t}
+x_{i+m}x_{i+t}).
\end{align*}

{\rm (3)}
Let $S=\sum_{i=0}^{m-1}
(x_ix_{i+t}+x_{i+m-t}x_{i+m}
+x_ix_{i+m+t}
+x_{i-t}x_{i+m})$. From results (1)
and (2),
\begin{align*}
S=& \sum_{i=0}^{n-1}
x_ix_{i+t}
+\sum_{i=m-t}^{m-1}(x_ix_{i+t}+
x_{i+m}x_{i+m+t}
+x_ix_{i+m+t}
+x_{i+t}x_{i+m})\\
=& \sum_{i=0}^{n-1}
x_ix_{i+t}
+\sum_{i=m-t}^{m-1}(x_i+x_{i+m})
(x_{i+t}+x_{i+m+t})\\
=&  \sum_{i=0}^{n-1}x_ix_{i+t}
+\sum_{i=m-t}^{m-1}(x_i+
x_{i+m}+1)(x_{i+t}
+x_{i+m+t}+1)
+
\sum_{i=m-t}^{m-1}(x_i+x_{i+t}
+x_{i+m}+x_{i+m+t}+1).
\end{align*}

{\rm (4)}
\begin{align*}
\sum_{i=0}^{m-1}x_ix_{i+m}
(x_{i+t}+x_{i+m+t})=&
\sum_{i=0}^{m-1}x_ix_{i+t}x_{i+m}
+\sum_{i=0}^{m-1}x_{i+m}x_{i+m+t}x_i\\
=&\sum_{i=0}^{m-1}x_ix_{i+t}x_{i+m}
+\sum_{i=m}^{2m-1}x_ix_{i+t}x_{i+m}\\
=&\sum_{i=0}^{n-1}x_ix_{i+t}x_{i+m}
\end{align*}

{\rm (5)} Let $B=
(a_ia_{i+t}+a_{i+t}
+a_{i+m-t})y_i$. Then
\begin{align*}
B=&((a_i+1)a_{i+t}
+a_{i+m-t})y_i\\
=& (x_i+x_{i+m})(x_{i+t}
+x_{i+m-t}+1)y_i+a_{m-t+i}y_i\\
=& (x_{i+m}+1)(x_i+x_{i+m})(x_{i+t}
+x_{i+m-t}+1)+a_{m-t+i}y_i\\
=& x_i(x_{i+m}+1)(x_{i+t}
+x_{i+m-t}+1)+a_{m-t+i}y_i\\
=&x_i(x_{i+m}+1)(x_{i+t}
+x_{i+m-t}+1)+
(x_{m-t+i}+x_{i-t}+1)(x_{i+m}+1).
\end{align*}
Hence, this result can be obtained directly. \end{proof}

Define a class of functions on
$\mathbb{F}_2^m\times \mathbb{F}_2^m$ of the form
\begin{equation*}
f_0(a,y)=\sum_{i=0}^{m-1}
\pi_i(a)y_i+\gamma(1+a)+h_0(a),
\end{equation*}
where $a,y\in \mathbb{F}_2^m$
, $\pi_i(a)=a_ia_{i+t}+a_{i+t}
+a_{i+m-t}$, $h_0(a)
=\sum_{i=m-t}^{m-1}a_ia_{i+t}$,
and $\gamma\in \mathbb{F}_2[X_0,X_1,
\cdots,X_{m-1}]$.
Since $m/gcd(m,t)$ is odd, then from
Gao et al. \cite{GZLC2012}[Proof in Theorem 1],  $(a_0,a_1,\cdots,
a_{m-1})
\mapsto (\pi_0(a),\pi_1(a),\cdots,
\pi_{m-1}(a)) $ is a permutation of
$\mathbb{F}_2^m$. Then
$f_0(a,y)$ is a bent function.  Take the
affine transform on $f_0(a,y)$ as
$$
y_i=x_{i+m}+1, a_i=x_{i+m}+1, ~
0\leq i \leq m-1.
$$
This affine transform is nondegenerate.
Hence, $f_1(x)
=f_0(x_0+x_m+1,\cdots,x_{m-1}+
x_{2m-1}+1,x_{m}+1,\cdots, x_{2m-1}+1)$
is also bent. From Lemma \ref{lem-2},
\begin{align*}
f_1(x)=& \sum_{i=0}^{m-1}(a_ia_{i+t}+a_{i+t}
+a_{i+m-t})y_i+h_0(a)
+\gamma(x_0+x_m,\cdots,x_{m-1}+x_{2m-1})\\
=& \sum_{i=0}^{n-1}(x_ix_{i+t}x_{i+m}
+x_ix_{i+t})+\sum_{i=0}^{m-1}
x_ix_{i+m}+
\gamma(x_0+x_m,\cdots,x_{m-1}+x_{2m-1})\\
&+\sum_{i=m-t}^{m-1}
(x_i+x_{i+m}+1)(x_{i+t}+x_{i+t+m}
+1)+h_0(a)\\&
+\sum_{i=m-t}^{m-1}
(x_i+x_{i+m}+x_{i+m+t}+1)
+\sum_{i=0}^{m-1}(x_i+x_{i+m}
+x_{i+m-t}+x_{i-t}+1)\\
=& \sum_{i=0}^{n-1}(x_ix_{i+t}x_{i+m}
+x_ix_{i+t})+\sum_{i=0}^{m-1}
x_ix_{i+m}+
\gamma(x_0+x_m,\cdots,x_{m-1}+x_{2m-1})
+L(x),
\end{align*}
where $L(x)=\sum_{i=m-t}^{m-1}
(x_i+x_{i+m}+x_{i+m+t}+1)
+\sum_{i=0}^{m-1}(x_i+x_{i+m}
+x_{i+m-t}+x_{i-t}+1)$ is an affine function. Hence,  we have
$$
f(x)=f_1(x)+L(x)
$$
is a bent function.
When $\gamma$ is rotation symmetric,
$f(x)$ is also rotations symmetric.
Obviously, if $\gamma$ has algebraic
degree $d\geq 3$, then
$f$ is also a function of algebraic degree $d$.
Hence, Theorem \ref{rs-2} follows.

\begin{remark}
From the proofs of Theorem \ref{rs-1} and
Theorem \ref{rs-2}, bent functions
in both theorems are in
the completed Maiorana-McFarland class of bent functions.
\end{remark}

\section{conclusion}
In this paper, we propose a systematic method for constructing $n$-variable rotation symmetric bent functions from
some functions in the Maiorana-McFarland
class. One class of rotation symmetric bent functions
has algebraic degree ranging from 2 to
$m$ and the other class has algebraic degree
ranging from 3 to $m$.

\section*{Acknowledgment}
We would like to thank Professor M. 
Rotteler for helpful suggestion.
This work was supported by
the National Natural Science Foundation of China
(Grant No. 11401480, No.10990011 \& No. 61272499).
Yanfeng Qi also acknowledges support from
KSY075614050 of Hangzhou Dianzi University.


\ifCLASSOPTIONcaptionsoff
  \newpage
\fi


\begin{thebibliography}{99}

\bibitem{CK1986} R. Calderbank and W. M. Kantor, ``The geometry of two-weight codes,"
Bull. London Math. Soc., vol. 18, no. 2, pp. 97-122, 1986.

\bibitem{CCK2008} A. Canteaut, P. Charpin, and G. Kyureghyan, ``A new class of monomial
bent functions," Finite Fields Their Appl., vol. 14, no. 1, pp. 221-241,
2008.
\bibitem{C1994} C. Carlet, ``Two new classes of bent functions," in EUROCRYPT
(Lecture Notes in Computer Science), vol. 765. New York, NY, USA:
Springer-Verlag, 1994, pp. 77-101.
\bibitem{C1996} C. Carlet, ``A construction of bent function," in Proc. 3rd Int. Conf.
Finite Fields and Appl., 1996, pp. 47-58.

\bibitem{C2010} C. Carlet, ``Boolean functions for cryptography and error correcting
codes," in Boolean Models and Methods in Mathematics, Computer
Science, and Engineering, Y. Crama and P. L. Hammer, Eds. Cambridge,
U.K.: Cambridge Univ. Press, 2010, pp. 257-397.
\bibitem{C2014} C Carlet. ``Open Open Problems on Binary Bent Functions". Open Problems in Mathematics and Computational Science 2014, pp 203-241


\bibitem{CGL2014} C. Carlet, G. Gao, and W. Liu, ``Results on constructions of rotation symmetric bent and semi-bent
functions," in SETA 2014, Springer International Publishing Switzerland, 2014, vol. 8865, Lecture Notes
in Computer Science, pp. 21-33.
\bibitem{CGL2014-1} C. Carlet, G. Gao, and W. Liu, ``A secondary construction and a transformation on rotation symmetric
functions, and their action on bent and semi-bent functions," J. Comb. Theory, Ser. A, vol. 127, pp.
161-175, 2014.

\bibitem{CM2011} C. Carlet and S. Mesnager, ``On Dillon¡¯s class H of bent functions,
Niho bent functions and O-polynomials," J. Combinat. Theory, Ser. A,
vol. 118, no. 8, pp. 2392-2410, 2011.

\bibitem{CRB2002} C. Charnes, M. Rotteler,
T Beth, ``Homogeneous bent functions, invariants, and designs,"
designs, codes and cryptography,
26(1-3), 139-154, 2002.


\bibitem{CG2008} P. Charpin and G. Gong, ``Hyperbent functions, Kloosterman sums and
Dickson polynomials," in Proc. ISIT, Jul. 2008, pp. 1758-1762.
\bibitem{CK2008} P. Charpin and G. Kyureghyan, ``Cubic monomial bent functions:
A subclass of M," SIAM J. Discrete Math., vol. 22, no. 2, pp. 650-665,
2008.
\bibitem{CHLL1997} G. Cohen, I. Honkala, S. Litsyn, and A. Lobstein, Covering Codes.
Amsterdam, The Netherlands: North Holland, 1997.
\bibitem{DMS2009} D.K. Dalai, S. Maitra, S. Sarkar, ``Results on rotation symmetric bent functions," Discrete Math. 309, 2398-2409 (2009)

\bibitem{D1974} J. Dillon, ``Elementary Hadamard difference sets," Ph.D. dissertation,
Netw. Commun. Lab., Univ. Maryland, College Park, MD, USA, 1974.


\bibitem{DD2004} J. F. Dillon and H. Dobbertin, ``New cyclic difference sets with Singer
parameters," Finite Fields Their Appl., vol. 10, no. 3, pp. 342-389, 2004.
\bibitem{Ding2015} C. Ding, ``Linear codes from some 2-designs," IEEE Trans. Inform. Theory, vol. 61, no. 6, pp. 3265-3275,  June 2015.
\bibitem{DLCCFG2006} H. Dobbertin, G. Leander, A. Canteaut, C. Carlet, P. Felke, and
P. Gaborit, ``Construction of bent functions via Niho power functions,"
J. Combinat. Theory, Ser. A, vol. 113, no. 5, pp. 779-798, 2006.


\bibitem{F1999} C. Fontaine, ``On some cosets of the first-order Reed-Muller code with high minimum weight," IEEE Trans. Inform. Theory 45, 1237-1243 (1999)
\bibitem{FF1998} E. Filiol, C. Fontaine, ``Highly nonlinear balanced Boolean functions with a good correlationimmunity," in Proceedings of EUROCRYPT¡¯98. Lecture Notes in Computer Science, vol. 1403 (1998), pp. 475-488
\bibitem{GZLC2012} G. Gao, X. Zhang, W. Liu, and C. Carlet, ``Constructions of quadratic and cubic rotation symmetric
bent functions," IEEE Trans. Inf. Theory, vol. 58, no. 7, pp. 4908-4913, 2012.

\bibitem{G1968} R. Gold, ``Maximal recursive sequences with 3-valued recursive crosscorrelation functions (Corresp.),"  IEEE Trans. Inf. Theory, vol. 14, no. 1,
pp. 154-156, Jan. 1968.
\bibitem{L2006} G. Leander, ``Monomial bent functions," IEEE Trans. Inf. Theory,
vol. 52, no. 2, pp. 738-743, Feb. 2006.
\bibitem{LK2006} G. Leander and A. Kholosha, ``Bent functions with 2r Niho exponents,"
IEEE Trans. Inf. Theory, vol. 52, no. 12, pp. 5529-5532, Dec. 2006.
\bibitem{LHTK2013} N. Li, T. Helleseth, X. Tang, and A. Kholosha, ``Several new classes of
bent functions from Dillon exponents," IEEE Trans. Inf. Theory, vol. 59,
no. 3, pp. 1818-1831, Mar. 2013.



\bibitem{M1973} R. L. McFarland, ``A family of noncyclic difference sets," J. Combinat.
Theory, Ser. A, vol. 15, no. 1, pp. 1-10, 1973.

\bibitem{M2009} S. Mesnager, ``A new family of hyper-bent Boolean functions in polynomial form," in Cryptography and Coding (Lecture Notes in Computer
Science), vol. 5921, M. G. Parker, Ed. Berlin, Germany: Springer-Verlag,
2009, pp. 402-417.
\bibitem{M2010} S. Mesnager, ``Hyper-bent Boolean functions with multiple trace terms,"
in Arithmetic of Finite Fields (Lecture Notes in Computer Science),
vol. 6087, M. Hasan and T. Helleseth, Eds. Berlin, Germany:
Springer-Verlag, 2010, pp. 97-113.
\bibitem{M2011-1} S. Mesnager, ``Bent and hyper-bent functions in polynomial form and
their link with some exponential sums and Dickson polynomials,¡± IEEE
Trans. Inf. Theory, vol. 57, no. 9, pp. 5996-6009, Sep. 2011.
\bibitem{M2011-2} S. Mesnager, ``A new class of bent and hyper-bent Boolean functions
in polynomial forms," Des., Codes Cryptography, vol. 59, nos. 1-3,
pp. 265-279, 2011.
\bibitem{M2014} S. Mesnager, ``Several New Infinite Families of Bent Functions and Their Duals," IEEE Trans. Inf. Theory, vol. 60, no. 7, JULY 2014
\bibitem{MF2013} S. Mesnager and J. P. Flori, ``Hyper-bent functions via Dillon-like
exponents," IEEE Trans. Inf. Theory, vol. 59, no. 5, pp. 3215-3232,
May 2013.


\bibitem{OSW1982} J. D. Olsen, R. A. Scholtz, and L. R. Welch, ``Bent-function sequences,"
IEEE Trans. Inf. Theory, vol. 28, no. 6, pp. 858-864, Nov. 1982.
\bibitem{PQ1999} J. Pieprzyk, C. Qu, ``Fast Hashing and rotation symmetric functions," J. Univ. Comput. Sci. 5, 20-31 (1999)
\bibitem{PTF2010} A. Pott, Y. Tan, and T. Feng, ``Strongly regular graphs associated with
ternary bent functions," J. Combinat. Theory, Ser. A, vol. 117, no. 6,
pp. 668-682, 2010.
\bibitem{PTFL2011} A. Pott, Y. Tan, T. Feng, and S. Ling, ``Association schemes arising
from bent functions," Des., Codes Cryptography, vol. 59, nos. 1-3,
pp. 319-331, 2011.
\bibitem{R1976} O. Rothaus, ``On bent functions," J. Combinat. Theory, Ser. A, vol. 20, no. 3, pp. 300-305, 1976.

\bibitem{SM2008} P. Stanica and S. Maitra, ``Rotation symmetric Boolean functions-Count and cryptographic properties," Discr. Appl. Math., vol.
156, pp. 1567-1580, 2008.


\bibitem{SMC2004} P. Stanica, S. Maitra, J. Clark, ``Results on rotation symmetric bent and correlation immune Boolean functions," in Proceedings of Fast Software Encryption 2004. Lecture Notes in Computer Science, vol. 3017 (2004), pp. 161-177

\bibitem{ST2015} S. Su, X. Tang, ``On the systematic constructions of rotation symmetric bent functions with any possible algebraic degrees,"  arXiv:1505.02875
\bibitem{XCX2015} G. Xu, X. Cao,  S. Xu. ``Several new classes of Boolean functions with few Walsh transform values," arXiv:1506.04886v1
\bibitem{YG2006} N. Y. Yu and G. Gong, ``Construction of quadratic bent functions
in polynomial forms," IEEE Trans. Inf. Theory, vol. 52, no. 7,
pp. 3291-3299, Jul. 2006.


\end{thebibliography}
\end{document}